\newcommand{\Suffix}{\mathit{Suffix}}
\newcommand{\STrie}{\mathit{STrie}}
\newcommand{\STree}{\mathit{STree}}
\newcommand{\LSTrie}{\mathit{LSTrie}}
\newcommand{\CDAWG}{\mathit{CDAWG}}
\newcommand{\LCDAWG}{\mathit{L}\mbox{-}\mathit{CDAWG}}
\newcommand{\slen}{\id{slen}}
\newcommand{\esuf}{\id{e\mbox{-}suf}}
\newcommand{\lab}{\id{lab}}
\newcommand{\slink}{\suflink}
\newcommand{\suflink}{\mathit{slink}}
\newcommand{\lspacer}{\kern.2ex}
\newcommand{\spacer}{\rule{0pt}{1.6ex}\kern.2ex}
\newcommand{\rev}[1]{\overline{#1}}
\renewcommand{\subsubsection}[1]{\noindent \textbf{#1}}
\newcommand{\sig}[1]{{\cal #1}}    
\newcommand{\name}{\emph}
\newcommand{\id}[1]{\mathit{#1}}
\newcommand{\polylog}{\mathrm{polylog}}
\newcommand{\Leqr}{\equiv_\mathit{L}}
\newcommand{\Reqr}{\equiv_\mathit{R}}
\newcommand{\Beqr}{\equiv}
\newcommand{\Leqc}[1]{[{#1}]_{L}}
\newcommand{\Reqc}[1]{[{#1}]_{R}}
\newcommand{\Beqc}[1]{[{#1}]}
\newcommand{\Lrep}[1]{\overrightarrow{#1}}
\newcommand{\Rrep}[1]{\overleftarrow{#1}}
\newcommand{\Brep}[1]{\overleftrightarrow{#1}}
\newcommand{\occ}{\mathit{occ}}
\newcommand{\mysubsubsection}[1]{\textbf{#1}:}
\newcommand{\co}[1]{{\tt #1}}      
\newtheorem{theorem}{Theorem}
\newtheorem{lemma}{Lemma}
\newtheorem{proposition}[theorem]{Proposition}
\newtheorem{corollary}[theorem]{Corollary}
\newtheorem{definition}{Definition}
\begin{document}

\title{
  Linear-size CDAWG: new repetition-aware indexing and grammar compression
}

\author{
  Takuya~Takagi$^1$\quad
  Keisuke Goto$^2$\quad
  Yuta Fujishige$^3$\\
  Shunsuke Inenaga$^3$\quad
  Hiroki Arimura$^1$\\
  {$^1$ Graduate School of IST, Hokkaido University, Japan}\\
  {\texttt{\{tkg,arim\}ist.hokudai.ac.jp}}\\
  {$^2$ Fujitsu Laboratories Ltd., Japan}\\
  {\texttt{goto.keisuke@jp.fujitsu.com}}\\
  {$^3$ Department of Informatics, Kyushu University, Japan}\\
  {\texttt{\{yuta.fujishige,inenaga\}@inf.kyushu-u.ac.jp}}\\
}

\maketitle

\pagestyle{plain}

\begin{abstract}
In this paper, we propose a novel approach to combine
\emph{compact directed acyclic word graphs} (\emph{CDAWGs})
and grammar-based compression.
This leads us to an efficient self-index,
called \emph{Linear-size CDAWGs} (\emph{L-CDAWGs}),
which can be represented with
$O(\tilde e_T \log n)$
bits of space allowing for
$O(\log n)$-time random and 
$O(1)$-time sequential accesses to edge labels, 
and $O(m \log \sigma + occ)$-time pattern matching.
Here,
$\tilde e_T$ is the number of all extensions of maximal repeats in $T$,
$n$ and $m$ are respectively the lengths of the text $T$ and a given pattern,
$\sigma$ is the alphabet size,
and $\occ$ is the number of occurrences of the pattern in $T$.
The repetitiveness measure $\tilde e_T$ is 
known to be much smaller than the text length $n$
for highly repetitive text.
For constant alphabets, our L-CDAWGs achieve
$O(m + \occ)$ pattern matching time with $O(e_T^r \log n)$ bits of space,
which improves the pattern matching time of
Belazzougui et al.'s run-length BWT-CDAWGs by a factor of $\log \log n$,
with the same space complexity. 
Here, $e_T^r$ is the number of right extensions of maximal repeats in $T$. 
As a byproduct, our result gives a way of constructing a straight-line program (SLP) of size $O(\tilde e_T)$ for a given text $T$ in $O(n + \tilde e_T \log \sigma)$ time. 


\end{abstract}


\section{Introduction}

\mysubsubsection{Background}
Text indexing is a fundamental problem in theoretical computer science,
where the task is to preprocess a given text so that
subsequent pattern matching queries can be answered quickly.
It has wide applications such as information retrieval, bioinformatics, and big data analytics~\cite{navarro:makinen:2007compressed,kreft:navarro:2013}.
There have been a lot of recent research on \name{compressed text indexes}~%
\cite{%
  ClaudeN11,%
  kreft:navarro:2013,%
  navarro:makinen:2007compressed,%
  belazzougui:BWTCDAWG:2015,%
  navarro2016self:bt,%
  makinen:navarro:siren:valimaki:2010bwt,%
  takabatake:tabei:sakamoto:2014imp:esp,%
  grossi2003high
}
that store a text $T$ supporting \co{extract} and \co{find} operations in space significantly smaller  than the total size $n$ of texts. 
Operation \co{extract} returns any substring $T[i..j]$ of the text.
Operation \co{find} returns the list of all $occ$ occurrences of a given pattern $P$ in $T$.
For instance, Grossi, Gupta, and Vitter~\cite{grossi2003high} gave a compressed text index based on compressed suffix arrays, which takes $s = n H_k + O(n \log\log n\log\sigma/\log n)$ bits of space and supporting $O(m \log\sigma + \polylog(n))$ pattern match time, where $H_k$ is the $k$-th order entropy of $T$
and $m$ is the length of the pattern $P$.

\mysubsubsection{Compression measures for highly repetitive text}
Recently, there has been an increasing interest in indexed searches for highly repetitive text collections. Typically, the compression size of such a text can be described in terms of some measure of repetition. The followings are examples of such repetitiveness measures for $T$:
\begin{itemize}
\item the \name{number $g_T$ of rules in a grammar (SLP)} representing $T$, 
\item the \name{number $z_T$ of phrases in the LZ77 parsing} of $T$, 
\item the \name{number $r_T$ of runs in the Burrows-Wheeler transform} of $T$, and 
\item the \name{number $\tilde e_T = e_T^r + e_T^\ell$ of right- and left-extensions of maximal repeats} of~$T$.
\end{itemize}
Belazzougui \emph{et al.}~\cite{belazzougui:BWTCDAWG:2015} observed close relationship among these measures. Specifically, the authors empirically observed that all of them showed similar logarithmic growth behavior in $|T|$ on a real biological sequence, and also theoretically showed that both $z_T$ and $r_T$ are upper bounded by $\tilde e_T$.
These repetitive texts are formed from many repeated fragments nearly identical. Therefore, one can expect that compressed index based on these measures such as $g_T, z_T, r_T$, and $\tilde e_T$ can effectively capture the redundancy inherent to these highly repetitive texts than conventional entropy-based compressed indexes~\cite{navarro:makinen:2007compressed}.

\mysubsubsection{Repetition-aware indexes}
There has been extensive research on a family of \name{repetition-aware indexes}~\cite{belazzougui:BWTCDAWG:2015,ClaudeN11,kreft:navarro:2013,makinen:navarro:siren:valimaki:2010bwt} since the seminal work by Claude and Navarro~\cite{ClaudeN11}. 
They proposed 
the first
compressed self-index based on grammars, which takes $s = g \log n + O(g \log g)$ bits supporting $O((m^2 + h(m + \occ))\log g)$ pattern match time, where $g = g_T$ and $h$ are respectively the size and height of a grammar.
Kreft and Navarro~\cite{kreft:navarro:2013} gave
the first compressed self-index based on LZ77, which takes $s = 3z \log n + 5n \log \sigma + O(z) + o(n)$ bits 
supporting $O(m^2d + (m + occ)\log z)$ pattern match time.
Here, $d$ is the height of the LZ parsing. 
Makinen, Navarro, Siren, and Valimaki~\cite{makinen:navarro:siren:valimaki:2010bwt} gave
a compressed index based on RLBWT, which takes $s = r \log\sigma\log(2n/r)(1+o(1)) + O(r\log\sigma\log\log(2n/r)) + O(\sigma\log n)$ bits supporting $O(m\, f(r\log\sigma, n\log\sigma))$ pattern match time, 
where 
$f(b,u)$ is the time for a binary searchable dictionary which is $O((\log b)^{0.5})$ and $o((\log\log u)^2)$ for example~\cite{makinen:navarro:siren:valimaki:2010bwt}.


\mysubsubsection{Previous approaches}
Considering the above results, we notice that in compression ratio, all indexes above achieve good performance depending on the repetitive measures, while in terms of operation time, most of them except the RLBWT-based one~\cite{makinen:navarro:siren:valimaki:2010bwt} have quadratic dependency in pattern size $m$.
Hence, a challenge here is to develop repetition-aware text indexes to achieve good compression ratio for highly repetitive texts in terms of repetition measures, while supporting faster \co{extract} and \co{find} operations. 
Belazzougui \emph{et al.}~\cite{belazzougui:BWTCDAWG:2015} proposed a repetition-aware index which combines \emph{CDAWGs}~\cite{blumer:bbhme:1987complete,crochemore1997cdawg} and the run-length encoded BWT~\cite{makinen:navarro:siren:valimaki:2010bwt}, to which we refer as \name{RLBWT-CDAWGs}. For a given text $T$ of the length $n$ and a pattern $P$ of the length $m$,
their index 
uses $O(e_T^r \log n)$ bits of space and
supports \co{find} operation in $O(m \log\log n + \occ)$ time.

\mysubsubsection{Main results}
In this paper, we propose a new repetition-aware index based on combination of CDAWGs and grammar-based compression, called the \emph{Linear-size CDAWG} (L-CDAWG, for short). The L-CDAWG of a text $T$ of length $n$ is a self-index for $T$
which can be stored in $O(\tilde e_T \log n)$ bits of space,
and support $O(\log n)$-time random access to the text,
$O(1)$-time sequential character access from the beginning of each edge label,
and $O(m \log \sigma + occ)$-time pattern matching.
For constant alphabets,
our L-CDAWGs use $O(e_T^r \log n)$ bits of space
and support pattern matching in $O(m + \occ)$ time,
hence improving the pattern matching time of Belazzougui \emph{et al.}'s
RLBWT-CDAWGs by a factor of $\log \log n$.
We note that RLBWT-CDAWGs use hashing to retrieve the first character
of a given edge label,
and hence RLBWT-CDAWGs seem to require $O(m \log\log n + \occ)$ time
for pattern matching even for constant alphabets.

From the context of studies on \name{suffix indices}, 
our L-CDAWGs can be seen as a successor of 
the \emph{linear-size suffix trie} (\name{LSTries}) by Crochemore~\emph{et al.}~\cite{epifanio:mignosi:grossi:crochemore:2016lst}.
The LSTrie is a variant of the suffix tree~\cite{crochemore:rytter:2003jewels}, which need not keep the original text $T$ by elegant scheme of linear time decoding using suffix links and a set of auxiliary nodes. 
However, it is a challenge to generalize their result for the CDAWG because the paths between a given pair of endpoints are not unique. 
By combining the idea of LSTries, an SLP-based compression with direct access~\cite{Gasieniec:kolpakov:DCC:2005,Bille:Landau:Raman:Sadakane:Rao:Weimann:2015}, we successfully devise a text index of $O(\tilde e_T\log n)$ bits by improving functionalities of LSTries. As a byproduct, our result gives a way of constructing an SLP of size $O(\tilde e_T\log \tilde e_T)$ bits of space for a text $T$. 
Moreover, since the L-CDAWG of $T$ retains the topology of the original CDAWG for $T$,
the L-CDAWG is a compact representation of all maximal repeats~\cite{raffinot:2001:max:repeats} that appear in $T$.


\section{Preliminaries}
\label{sec:prelim}

In this section, we give some notations and definitions 
to be used in the following sections. 
In addition, we recall string data structures such as
suffix tries,
suffix trees, CDAWGs, linear-size suffix tries
and straight-line programs,
which are the data structures to be considered in this paper.

\subsection{Basic definitions and notations} 

\mysubsubsection{Strings} 
Let $\Sigma$
be a general ordered alphabet of size $\sigma\ge 2$.
An element $T = t_1\cdots t_n$ of $\Sigma^*$ is called a \name{string}, where
$|T| = n$ denotes its length.
We denote the empty string by $\varepsilon$ which is the string of length $0$, namely, $|\varepsilon| = 0$.
Let $\Sigma^+ = \Sigma^* \setminus \{\varepsilon\}$.
If $T = XYZ$,
then $X$, $Y$, and $Z$ are called 
a \emph{prefix}, a \emph{substring}, and a \emph{suffix} of $T$, respectively. 
Let $T= t_1\cdots t_n \in \Sigma^n$ be any string of length $n$. 
For any $1 \leq i \leq j \leq n$, 
let $T[i..j] = t_i\cdots t_j$ denote the substring of $T$ that begins and ends at positions $i$ and $j$ in $T$, 
and let $T[i] = t_i$ denote the $i$th character of $T$. 
For any string $T$,
we denote by $\rev{T}$ the reversed string of $T$,
i.e., $\rev{T} = T[n]\cdots T[1]$. 
Let $\Suffix(T)$ denote the set of suffixes of $T$.
For a string $x$, the number of occurrences of $x$ in $T$ 
means the number of positions where $x$ is a substring in $T$.

\mysubsubsection{Maximal repeats and other measures of repetition}
A substring $w$ of $T$ is called a \emph{repeat} if the number of occurrences of $w$ in $T$ more than one.
A \name{right extension} (resp. a \name{left extension}) of $w$ of $T$ is any substring of $T$ with the form $wa$ (resp. $aw$) for some letter $a \in \Sigma$. 
A repeat $w$ of $T$ is a \emph{maximal repeat} if both left- and right-extensions of $w$ occur strictly fewer times in $T$ than $w$.
In what follows, we denote by
$\mu_T$, $e_T^r$, $e^{\ell}_T$, and $\tilde e_T = e_T^r + e^{\ell}_T$
the numbers of maximal repeats,
right-extensions, left-extensions, and all extensions of maximal repeats appearing in $T$, respectively.
Recently, it has been shown in~\cite{belazzougui:BWTCDAWG:2015} that the number $\tilde e_T$ is an upper bound on the number $r_T$ of runs in the Burrows-Wheeler transform for $T$ and the number $z_T$ of factors in the Lempel-Ziv parsing of $T$.
It is also known that $\tilde e_T \leq 4n-4$ and
$\mu_T < n$, where $n = |T|$~\cite{blumer:bbhme:1987complete,raffinot:2001:max:repeats}.

\mysubsubsection{Notations on graphical indexes}
All index structures dealt with in this paper, such as suffix tries, suffix trees, CDAWGs, linear-size suffix tries (LSTries), and linear-size CDAWGs (L-CDAWGs), are \name{graphical indexes} in the sense that an index is a pointer-based structure built on an underlying DAG $G_L = (V(L), E(L))$ with a root $r \in V(L)$ and mapping $lab: E(L) \to \Sigma^+$ that assign a label $\lab(e)$ to each edge $e \in E(L)$.
For an edge $e = (u, v) \in E(L)$, we denote its \name{end points} by $e.hi := u$ and $e.lo := v$, respectively. The \name{label string} of $e$ is $\lab(e) \in \Sigma^+$. The \name{string length} of $e$ is $\slen(e) := |\lab(e)| \ge 1$. An edge is called \name{atomic} if $\slen(e) = 1$, and thus, $\lab(e) \in \Sigma$. 
For a path $p = (e_1, \ldots, e_k)$ of length $k\ge 1$, we extend its \name{end points}, \name{label string}, and \name{string length} by $p.hi := e_1.hi$, $p.lo := e_k.lo$, $\lab(p) := \lab(e_1)\dots \lab(e_k) \in \Sigma^+$, and $\slen(p) := \slen(e_1)+ \dots + \slen(e_k)\ge 1$, respectively.

\begin{figure}[t]
  \centerline{
    \includegraphics[width=1.0\linewidth]{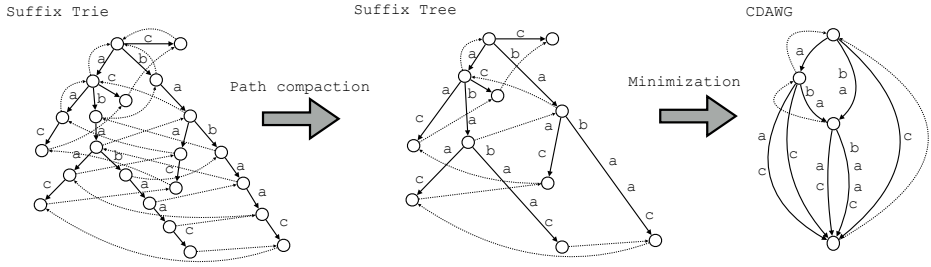}
  }
  \caption{Illustration of $\STrie(T)$, $\STree(T)$,
  and $\CDAWG(T)$ with $T = \mathtt{ababaac}$.
  The solid arrows and broken arrows represent 
  the edges and the suffix links of each data structure, respectively.
  }
    \label{fig:strie_stree_cdawg}
\end{figure}

\subsection{Suffix tries and suffix trees} 

\label{sec:def_ST_DAWG} 
The \emph{suffix trie}~\cite{crochemore:rytter:2003jewels} for a text $T$ of length $n$, 
denoted $\STrie(T)$, is a trie 
which represents $\Suffix(T)$.
The size of $\STrie(T)$ is $O(n^2)$. 
The path label of a node $v$ is the string $str(v) := \lab(\pi_v)$ formed by concatenating the edge labels on the unique path $\pi_v$ from the root to $v$.
If $x = str(v)$, we denote $v$ by $[x]$. We may identify $v = [x]$ with its label $x$ if it is clear from context. 
A substring $x$ of $T$ is said to be \emph{branching} if there exists two distinct characters $a,b \in \Sigma$ such that both $xa$ and $xb$ are substrings of $T$. 
For any $a \in \Sigma$, $x \in \Sigma^*$,
we define the \emph{suffix link} of node $[ax]$ by $\suflink([ax]) = [x]$ if $[ax]$ is defined. 

The \emph{suffix tree}~\cite{Weiner,crochemore:rytter:2003jewels} for a text $T$, 
denoted $\STree(T)$, is a compacted trie which also 
represents $\Suffix(T)$.
$\STree(T)$ can be obtained by compacting every path of $\STrie(\sig{T})$ which consists of non-branching internal nodes (see Fig.~\ref{fig:strie_stree_cdawg}). 
Since every internal node of $\STree(T)$ is branching, and since there are at most $n$ leaves in $\STree(T)$, 
the numbers of edges and nodes are $O(n)$. 
The edges of $\STree(T)$ are labeled by non-empty substrings of $T$. 
By representing each edge label $\alpha$ with a pair $(i, j)$ of integers such that $T[i..j] = \alpha$, $\STree(T)$ can be stored in
$O(n \log n)$ bits of space.

\subsection{CDAWGs}

The \emph{compact directed acyclic word graph}~\cite{blumer:bbhme:1987complete,crochemore:rytter:2003jewels}
for a text $T$, denoted $\CDAWG(T)$,
is the minimal compact automaton which represents $\Suffix(T)$. 
$\CDAWG(T)$ can be obtained from $\STree(T\$)$ by merging isomorphic subtrees and deleting associated endmarker $\$ \not\in \Sigma$. 
Since $\CDAWG(T)$ is an edge-labeled DAG, we represent a directed edge from node $u$ to $v$ with label string $x \in \Sigma^+$ by a triple $f = (u, x, v)$.
For any node $u$,
the label strings of out-going edges from $u$ start with mutually distinct characters. 

Formally, $\CDAWG(T)$ is defined as follows.
For any strings $x, y$,
we denote $x \Leqr y$ (resp. $x \Reqr y$) iff
the beginning positions (resp. ending positions) of $x$ and $y$ in $T$
are equal.
Let $\Leqc{x}$ (resp. $\Reqc{x}$) denote the
equivalence class of strings w.r.t. $\Leqr$ (resp. $\Reqr$).
All strings that are \emph{not} substrings of $T$ form
a single equivalence class, and in the sequel we will consider only 
the substrings of $T$.
Let $\Lrep{x}$ (resp. $\Rrep{x}$) denote
the longest member of the equivalence class $\Leqc{x}$ (resp. $\Reqc{x}$).
Notice that each member of $\Leqc{x}$ (resp. $\Reqc{x}$)
is a prefix of $\Lrep{x}$ (resp. a suffix of $\Rrep{x}$).
Let $\Brep{x} = \Rrep{(\Lrep{x})} = \Lrep{(\Rrep{x})}$.
We denote $x \Beqr y$ iff $\Brep{x} = \Brep{y}$,
and let $\Beqc{x}$ denote the equivalence class w.r.t. $\Beqr$.
The longest member of $\Beqc{x}$ is $\Brep{x}$
and we will also denote it by $value([x])$.
We define $\CDAWG(T)$ as an edge-labeled DAG $(V, E)$ such that
$V = \{\Reqc{\Lrep{x}} \mid \mbox{$x$ is a substring of $T$}\}$ and 
$E = \{(\Reqc{\Lrep{x}}, \alpha, \Reqc{\Lrep{x}\alpha}) \mid \alpha \in \Sigma^+, \Lrep{x} \not \Beqr \Lrep{x}\alpha\}$.
The $\Lrep{\cdot}$ operator corresponds to compacting
non-branching edges (like conversion from $\STrie(T)$ to $\STree(T)$)
and the $\Reqc{\cdot}$ operator corresponds to
merging isomorphic subtrees of $\STree(T)$.
For simplicity,
we abuse notation so that
when we refer to a node of $\CDAWG(T)$ as $[x]$,
this implies $x = \Lrep{x}$ and $[x] = \Reqc{\Lrep{x}}$.

Let $[x]$ be any node of $\CDAWG(T)$ and
consider the suffixes of $value([x])$ 
which correspond to the suffix tree nodes that are merged when transformed into the CDAWG.
We define the \emph{suffix link} of
node $[x]$ by $\suflink([x]) = [y]$, iff
$y$ is the longest suffix of $value([x])$ that does not belong to $[x]$.

It is shown
that all nodes of $\CDAWG(T)$ except the sink
correspond to the maximal repeats of $T$. Actually, $value([x])$ is a maximal repeat in $T$~\cite{raffinot:2001:max:repeats}. 
Following this fact, one can easily see that the numbers of edges of $\CDAWG(T)$ and $\CDAWG(\rev{T})$ coincide with the numbers $e_T^r$ and $e_T^\ell$ of right- and left- extensions of maximal repeats of $T$, respectively~\cite{belazzougui:BWTCDAWG:2015,raffinot:2001:max:repeats}.

By representing each edge label $\alpha$ with pairs $(i, j)$ of integers such that $T[i..j] = \alpha$, $\CDAWG(T)$ can be stored in $O(e_T^r \log n + n \log \sigma)$ bits of space.

\begin{figure}[t]
  \centerline{
    \includegraphics[width=1.0\linewidth]{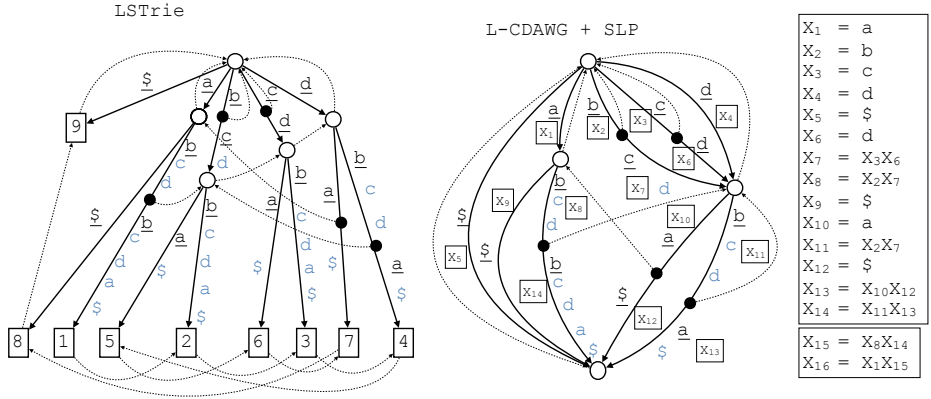}
  }
  \caption{
    Illustration of $\LSTrie(T)$ 
    and our index structure $\LCDAWG(T)$ with $SLP$ 
    for text $T = \mathtt{abcdbcda\$}$.
    Solid and broken arrows represent the edges and suffix links, 
    respectively.
    Underlined and shaded characters attached to each edge are the first (real) and the following (virtual) characters of the original edge label.
    The expression $X_i$ at the edge indicates the $i$-th variable of the SLP for $T$. 
    }
  \label{fig:lstrie_lcdawg}
  \label{fig:slp_lcdawg}
\end{figure}

\subsection{LSTrie}

Recently, Crochemore~\emph{et al.}~\cite{epifanio:mignosi:grossi:crochemore:2016lst} proposed a compact variant of a suffix trie, called \emph{linear-size suffix trie} (or LSTrie, for short), denoted $\LSTrie(T)$. It is a compacted tree with the topology and the size similar to $\STree(T)$, but has no indirect references to a text $T$~(See Fig.~\ref{fig:lstrie_lcdawg}). 
$\LSTrie(T)$ is obtained from $\STree(T)$ by adding all nodes $v$ such that their suffix links $slink(v)$ appear also in $\STree(T)$. Unlike $\STree(T)$, 
each edge $(u,v)$ of $\LSTrie(T)$ stores the first character
and the length of the corresponding suffix tree edge label (see Fig.~\ref{fig:lstrie_lcdawg}).
Using auxiliary links called the \name{jump pointers} the following theorem is proved.

\begin{proposition}[Crochemore~\emph{et al.}~\cite{epifanio:mignosi:grossi:crochemore:2016lst}]
  For a text $T$ of length $n$, the linear-size suffix trie $\LSTrie(T)$ for $T$ can be stored in $O(n \log n)$ bits of space supporting reconstruction of the label of a given edge in $O(\ell)$ time, where $\ell$ is the length of the edge label. 
\end{proposition}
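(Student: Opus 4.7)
The plan is to handle the $O(n \log n)$-bit space bound and the $O(\ell)$-time label reconstruction separately. First I would show that $\LSTrie(T)$ has only $O(n)$ nodes and $O(n)$ edges, after which storing at each edge its first character together with its integer length $\slen(e) \le n$ in $O(\log n)$ bits fits in $O(n \log n)$ bits overall. The nodes inherited from $\STree(T)$ are $O(n)$ by the standard suffix-tree bound, so the remaining task is to bound the added nodes, namely the implicit $\STree$-positions $v$ with $\slink(v)$ an explicit $\STree$-node. Here I would associate each added $v$ with the $\STree$-edge $(u',v')$ on which it lies together with the explicit node $\slink(v)$ that it points to, and use a double-counting argument: along any fixed $\STree$-edge $(u',v')$ the added interior positions correspond bijectively to the explicit $\STree$-nodes appearing strictly between $\slink(u')$ and $\slink(v')$ on the descending path, so summing over $\STree$-edges and noting that each explicit $\STree$-node has bounded total incidence in this charge gives $O(n)$.

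For the reconstruction, at each LSTrie edge $e = (u,v)$ I would store the first character $c_1$ of its label, the length $\slen(e)$, and a jump pointer $\jumplink(e)$ to the first LSTrie edge on the descending path from $\slink(u)$ toward $\slink(v)$. The key structural observation is that this descending path spells out exactly the same label $c_1 c_2 \cdots c_\ell$ as $e$, and by the very definition of which nodes get added to $\LSTrie(T)$ this is already a path of LSTrie edges with no implicit holes left to fill. The algorithm for emitting the label of $e$ is then: output $c_1$, follow $\jumplink(e)$ to the first slink-image edge $e_1$, recursively emit its label past the first character (these characters coincide with a prefix of $c_2 c_3 \cdots c_\ell$), descend in LSTrie to $e_1.lo$ to pick up the next slink-image edge, and iterate until $\ell$ characters have been produced.

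The hard part will be showing that this walk runs in $O(\ell)$ time rather than recursing to super-linear depth; a naive analysis could let each character pay the full recursion depth, which might be as bad as $\Theta(n)$. The fix lies precisely in the LSTrie construction, since no suffix-link walk can leave the structure, so the slink-image path of any edge has already been compiled entirely into LSTrie edges. I would formalize this by induction on $\slen(e)$, or equivalently by an amortized potential argument using string depth as the potential, showing that advancing one position along the label of $e$ corresponds to advancing one position along its slink-image path, with the cost of each such step telescoping to $O(1)$ amortized per emitted character. Combining the space count with this recursion bound then yields the proposition.
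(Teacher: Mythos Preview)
The paper does not prove this proposition; it is quoted as a result of Crochemore \emph{et al.}\ and stated without proof, so there is no in-paper argument to compare your sketch against. That said, your outline is essentially the argument of the original LSTrie paper, and it also parallels what the present paper does for the L-CDAWG analogue (Lemmas~\ref{lem:esuf:nonbranch} and~\ref{lem:jump:termination}): bound the added (type-2) nodes by charging each to a left-extension of an explicit $\STree$ node, whose total count is $O(n)$; and decode an edge by following suffix links until the image becomes a genuine path of $\geq 2$ LSTrie edges, then recurse on each piece.

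Two places deserve tightening. In the node count, ``bounded total incidence'' is the right idea but you should make explicit that the bound is on the \emph{sum} over all $\STree$ nodes (the total number of Weiner links is $O(n)$), not a per-node constant, since one node may have up to $\sigma$ preimages under $\slink$. In the reconstruction, the step ``recursively emit its label past the first character'' is a detour: the clean recursion decodes each image edge $e_i$ in full, using that $\lab(e)=\lab(e_1)\cdots\lab(e_k)$. The $O(\ell)$ running time then follows by strong induction on $\slen(e)$: the jump is guaranteed to land on a path with $k\geq 2$ edges (cf.\ Lemma~\ref{lem:jump:termination} here), each strictly shorter than $e$, and the $O(1)$ overhead per call is absorbed precisely because $k\geq 2$ gives one unit of slack in the inductive inequality. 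Your ``string-depth potential'' phrasing is a correct alternative packaging of the same count.
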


Crochemore~\emph{et al.}'s method~\cite{epifanio:mignosi:grossi:crochemore:2016lst} does not regard the order of decoding characters on an edge label.
This implies that $\LSTrie(T)$ needs $O(\ell)$ worst case time to read any prefix of an edge label of length $\ell$.
This may cause troubles in some applications including pattern matching. 
In particular, it does not seem straightforward to match a pattern $P$ against 
a prefix of the label of an edge $e$ in $O(|P|)$ time when $|P| < |\lab(e)|$. 
We will solve these problems in Section~\ref{sec:algo} later.

\subsection{Straight-line programs}

A straight-line program (SLP) is a context-free grammar (CFG) 
in the Chomsky normal form generating a single string.
SLPs are often used in grammar compression algorithms~\cite{navarro:makinen:2007compressed}.

Consider an SLP $R$ with $n$ variables.
Each production rule is either of form $X \to a$ with $a \in \Sigma$
or $X \to YZ$ without loops.
Thus an SLP produces a single string.
The \emph{phrase} of each $X_i$, denoted $\sig F(X_i)$,
is the string that $X_i$ produces.
The string defined by SLP $R$ is $\sig F(X_n)$.
We will use the following results.


\begin{proposition}[Gasieniec \emph{et al.}~\cite{Gasieniec:kolpakov:DCC:2005}]
  \label{Gasieniec}
  For an SLP $R$ of size $g$ for a text of length $n$,
  there exist a data structure of $O(g \log n)$ bits of space which supports expansion of a prefix of $\sig F(X_i)$ for any variable $X_i$ in $O(1)$ time per character, and can be constructed in $O(g)$ time.

\end{proposition}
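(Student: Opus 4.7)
The plan is to augment the SLP with a little bottom-up information and implement sequential access as a left-to-right depth-first traversal of the derivation tree of $X_i$, exploiting the fact that every production has the form $X\to a$ or $X\to YZ$ so that the derivation tree is a full binary tree.

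First I would compute, in a single $O(g)$-time bottom-up sweep of the grammar DAG, two fields per variable: the phrase length $|\sig F(X)|$ and the leftmost terminal $\mathrm{first}(X)=\sig F(X)[1]$. These are $1$ and $a$ when $X\to a$, and $|\sig F(Y)|+|\sig F(Z)|$ and $\mathrm{first}(Y)$ when $X\to YZ$. Stored alongside the productions, they occupy $O(g\log n)$ bits.

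For the expansion itself I would maintain a stack $S$ of variables encoding the right siblings still pending in the DFS, initialized to $[X_i]$. To emit the next character, pop the top $Y$; if $Y$ is a terminal, output it; otherwise if $Y\to Y'Z'$ push $Z'$, then push $Y'$, and continue. Because the derivation tree is a full binary tree, the set of nodes visited while producing the first $k$ leaves consists of those $k$ leaves, at most $k-1$ branching internal nodes whose two children have both been fully explored, and the current left spine from the root down to the most recent leaf; each of the visited edges is traversed at most twice. Combined with the observation that any simple path in the grammar DAG has length at most $g$, the total work to output $k$ characters is $O(k+g)$, the stack always fits in $O(g\log n)$ bits, and the amortized cost per output character is $O(1)$ once the one-off leftmost descent of length $O(g)$ is amortized away.

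The step I expect to be most delicate is handling that initial leftmost descent cleanly so that the additive $g$ term does not reappear on every query and so that even single-character prefixes are served quickly. I would use $\mathrm{first}(X_i)$ to return the very first character in $O(1)$ without building the stack, then materialize the stack lazily along the leftmost path on the next character request, so that the $O(g)$ initialization is charged against the prefix subsequently emitted. This matches the standard interpretation of $O(1)$-per-character sequential decoding of grammar-compressed strings and yields the claimed $O(g)$ construction time, $O(g\log n)$ space, and $O(1)$-per-character expansion bounds.
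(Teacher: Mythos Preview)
The paper does not prove this proposition; it is quoted verbatim as a known result of Gasieniec \emph{et al.}~\cite{Gasieniec:kolpakov:DCC:2005}, so there is no in-paper argument to compare your attempt against. Assessed on its own, however, your argument does not establish the stated bound.

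Your stack-based DFS gives $O(k+h)$ total work to output a length-$k$ prefix, where $h\le g$ is the height of the derivation of $X_i$: the internal nodes you pop are exactly those whose leftmost leaf lies among the first $k$ leaves, and the ancestors of the $k$-th leaf alone already contribute an additive $h$. Your patch of returning $\mathrm{first}(X_i)$ instantly for $k=1$ and ``materialising the stack lazily'' does not remove this term: to produce the \emph{second} character you must still descend the whole leftmost spine of $X_i$, which can have length $\Theta(h)=\Theta(g)$. Hence already for $k=2$ you spend $\Theta(g)$ time, so neither worst-case nor amortised $O(1)$ per character holds. In the paper's application this matters, because a fresh prefix query is issued at every edge on the search path, and an extra additive $h$ per edge would spoil the claimed $O(m\log\sigma+\occ)$ pattern-matching time.

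What your construction lacks, and what makes the result of~\cite{Gasieniec:kolpakov:DCC:2005} \emph{real-time}, is a precomputed shortcut that collapses each leftmost spine into a single hop. Beyond $\mathrm{first}(X)$, one precomputes for every nonterminal a pointer that, after emitting $\mathrm{first}(X)$, jumps in $O(1)$ to the variable whose expansion begins with the second character of $\sig F(X)$, while recording the pending right siblings along the spine as a single stack entry (a pointer into an $O(g)$-size linked structure threading the right children along shared leftmost paths). With that, each emitted character costs exactly one pop, one constant-time update, and one push. Your bottom-up pass computing $\mathrm{first}$ and the phrase lengths is the right preprocessing skeleton; the missing idea is this spine compression.
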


\begin{proposition}[Bille \emph{et al.}~\cite{Bille:Landau:Raman:Sadakane:Rao:Weimann:2015}]
  \label{Bille}
  For an SLP $R$ of size $g$ representing a text of length $n$,
  there exists a data structure of $O(g \log n)$ bits of space
  which supports to access consecutive $m$ characters at arbitrary position of $\sig F(X_i)$ for
  any variable $X_i$ in $O(m + \log n)$ time,
  and can be constructed in $O(g)$ time.   
\end{proposition}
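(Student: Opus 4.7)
The plan is to build, on top of the DAG of the SLP $R$, the \emph{heavy path decomposition} of its derivation tree, following the strategy that underlies random-access results for grammar compression.

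First, I would compute $|\sig F(X_i)|$ for every variable $X_i$ by a single bottom-up pass over the $g$ rules, which takes $O(g)$ time and $O(g \log n)$ bits. Next, for each non-terminal rule $X_i \to X_j X_k$, I would label as \emph{heavy} the child with the larger phrase length (ties broken arbitrarily) and the other child as \emph{light}; since a light child's phrase has length at most half of its parent's, every root-to-leaf path in the derivation tree crosses at most $O(\log n)$ light edges. At each node $X_i$ I would also store the length of the phrase of its heavy child, so that from $X_i$ and a position $p$ one can in $O(1)$ time decide whether $p$ falls in the heavy child or in the light child, and recurse.

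With this structure, a single-character access at position $p$ of $\sig F(X_i)$ is handled by walking down the derivation tree from $X_i$, charging $O(1)$ work to each edge traversed; since the descent can only turn into a light edge $O(\log n)$ times (consecutive heavy edges can be shortcut using precomputed heavy-path information in the style of Proposition~\ref{Gasieniec}), the total cost is $O(\log n)$. To extract $m$ consecutive characters starting at position $p$, I would first perform such a descent in $O(\log n)$ time and then maintain a stack representing the current position in the derivation tree. Advancing by one character corresponds to popping up from the current leaf to the lowest ancestor that still has a right sibling to visit, then descending into that sibling until a new leaf is reached. Each time the cursor enters a new light edge, a fresh $O(1)$ amount of work per heavy segment is charged, and the key amortized accounting is that the light edges crossed while producing $m$ characters sum to $O(m + \log n)$, because every light edge entered during the sequential scan (except for those on the initial descent) is paid for by the characters emitted from its subtree.

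The main obstacle is exactly this amortized argument for the sequential phase: one must show that the total number of light edges traversed during the production of $m$ consecutive characters is $O(m)$, not $O(m \log n)$. This follows from observing that when the scan leaves the subtree rooted at a light child, that light child's entire phrase has already been output, so the light edges can be charged to disjoint blocks of output characters; combined with the $O(\log n)$ light edges paid for the initial descent, this yields the claimed $O(m + \log n)$ bound. All precomputed data (phrase lengths, heavy-child pointers, and the heavy-path summaries) fits in $O(g \log n)$ bits and is built in $O(g)$ time, establishing the proposition.
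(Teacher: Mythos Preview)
The paper does not provide a proof of this proposition; it is stated as a cited result of Bille \emph{et al.}~\cite{Bille:Landau:Raman:Sadakane:Rao:Weimann:2015} and used as a black box, so there is no proof in the paper against which to compare your attempt.

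Regarding your sketch on its own merits: it follows the correct high-level strategy of the cited work (heavy-path decomposition of the derivation DAG), but the step where you claim $O(\log n)$ time for a single random access has a real gap. You write that ``consecutive heavy edges can be shortcut using precomputed heavy-path information in the style of Proposition~\ref{Gasieniec},'' yet Proposition~\ref{Gasieniec} is about \emph{sequential} left-to-right expansion from the start of a phrase, not about locating an arbitrary offset inside a heavy path. To skip across a heavy path you must answer a predecessor query on the prefix sums of the light-child phrase lengths hanging off that path; done na\"ively this costs $\Theta(\log g)$ per heavy path and hence $\Theta(\log n\cdot\log g)$ total, and with no such structure at all the descent degenerates to $O(h)$, where the height $h$ can be $\Theta(g)$. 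The actual $O(\log n)$ bound in Bille \emph{et al.}\ comes from a global biased-search argument exploiting that the light-subtree sizes telescope along any root-to-leaf path, so that the combined predecessor work over all heavy paths is $O(\log n)$; this is the key technical idea of the cited result and it is absent from your proposal.
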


\section {The proposed data structure: L-CDAWG}
\label{sec:algo}

In this section, we present \emph{the Linear-size CDAWG} (\emph{L-CDAWG}, for short).
The L-CDAWG can support CDAWG operations in the same time complexity without holding the original input text and  can reduce the space complexity from
$O(e_T^r \log n + n \log \sigma)$ bits of space to $O(\tilde e_T\log n)$ bits of space, 
where $\tilde e_T = e_T^r + e_T^\ell$ is the number of extensions of maximal repeats.
From now on, we assume that an input text $T$ terminates with a unique
character \$ which appears nowhere else in $T$.

\subsection {Outline}
\label{algo:def_lcdawg}

The \emph{Linear-size CDAWG}
for a text $T$ of length $n$, denoted $\LCDAWG(T)$, is a DAG whose edges are labeled with single characters. 
$\LCDAWG(T)$ can be obtained from $\CDAWG(T)$ by the following modifications. 
From now on, we refer to the original nodes appearing in $\CDAWG(T)$ as
\emph{type-1 nodes}, which are always branching except the sink.

\begin{enumerate}
  
\item First, we add new non-branching nodes, called \emph{type-2 nodes} to $\CDAWG(T)$. Let $u = value(\Beqc{x})$ for any type-1 node $\Beqc{x}$ of $\CDAWG(T)$. If $au$ is a substring of $T$ but the path spelling out $au$ ends in the middle of an edge, then we introduce a type-2 node $v$ representing $au$.
We add the suffix link $u = \slink(v)$ as well. Adding type-2 nodes splits an edge into shorter ones. Note that more than one type-2 nodes can be inserted into an edge of $\CDAWG(T)$.
  
\item Let $(u, x, v)$ be any edge after all the type-2 nodes are inserted, where $x \in \Sigma^+$. We represent this edge by $e = (u, c, v)$ where $c$ is the first character $c = x[1] \in \Sigma$ of the original label. We also store the original label length $\slen(e) = |x|$.

\item We will augment $\LCDAWG(T)$ with a set of SLP production rules whose nonterminals correspond to edges of $\LCDAWG(T)$. The definition and construction of this SLP will be described later in Section~\ref{subsec:algo:grammar}. 
\end{enumerate}

If non-branching type-2 nodes are ignored,
then the topology of $\LCDAWG(T)$ is the same as that of $\CDAWG(T)$.
For ease of explanation, we denote by $\lab(e)$ 
the original label of edge $e$.
Namely, for any edge $e = (u, c, v)$,
$\lab(e) = x$ iff $(u, x, v)$ is the original edge for $e$.

The following lemma gives an upper bound of the numbers of nodes and edges in $\LCDAWG(T)$.
Recall that $\mu_T$ is the number
of maximal repeats in $T$,
$e_T^\ell$ and $e_T^r$ are respectively the number of left- and right-extensions
of maximal repeats in $T$,
and $\tilde e_T = e_T^\ell + e_T^r$.

\begin{lemma}
  \label{lemma:LCDAWG_size}
  For any string $T$, let $\LCDAWG(T) = (V, E)$,
  then $|V| = O(\mu_T + e_T^\ell)$ and $|E| = O(\tilde e_T)$.

\end{lemma}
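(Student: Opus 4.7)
The plan is to bound the type-1 and type-2 node counts separately, then derive the edge bound from a simple edge-splitting argument.

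For the type-1 nodes, I would invoke the classical characterization cited earlier~\cite{raffinot:2001:max:repeats}: every non-sink node $[x]$ of $\CDAWG(T)$ has $value([x])$ equal to a distinct maximal repeat of $T$ (including $\varepsilon$ at the source). Counting the sink separately yields at most $\mu_T + 1 = O(\mu_T)$ type-1 nodes.

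For the type-2 nodes, I would exhibit an injection into the set of left-extensions of maximal repeats. By the construction in Section~\ref{algo:def_lcdawg}, each type-2 node $v$ represents a string $au$ where $u = value([x])$ for some type-1 node $[x]$ and $au$ occurs in $T$; the sink contributes no type-2 nodes since its $value$ has no left extension occurring in $T$. Thus $u$ is a maximal repeat, $au$ is a left-extension of $u$, and the map $v \mapsto au$ is injective because $au$ uniquely recovers $a$ (its first letter) and $u$ (the remainder), and hence the parent type-1 node $[u]$. Therefore the number of type-2 nodes is at most $e_T^\ell$, giving $|V| = O(\mu_T + e_T^\ell)$.

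For the edge count, I would observe that inserting a new node into the middle of an edge replaces one edge with two, contributing a net $+1$ to the total, and this holds regardless of how many insertions land on the same original edge. Since $\CDAWG(T)$ has $e_T^r$ edges to begin with and we insert at most $e_T^\ell$ type-2 nodes overall, we get $|E| \le e_T^r + e_T^\ell = \tilde e_T$. The only subtlety I foresee is handling the case $u = \varepsilon$ at the source, which is unproblematic once one verifies that $\varepsilon$ is a maximal repeat under the definitions of Section~\ref{sec:prelim} and that its single-character left extensions are consequently counted in $e_T^\ell$; the rest is purely a combinatorial accounting.
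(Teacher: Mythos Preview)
Your proof is correct and follows essentially the same route as the paper's: bound type-1 nodes by $\mu_T$ via the maximal-repeat characterization, bound type-2 nodes by $e_T^\ell$ via the correspondence with left-extensions $au$ of maximal repeats $u$, and obtain the edge bound from the fact that each (non-branching) type-2 insertion adds exactly one edge to the original $e_T^r$. One tiny slip: $[u]$ is not the \emph{parent} of the type-2 node $v$ but rather the target of its suffix link; the injectivity of $v \mapsto au$ really follows from the determinism of the CDAWG (each substring is represented at a unique location), not from recovering a parent.
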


\begin{proof}
  Let $\CDAWG(T) = (V_0, E_0)$ and $\CDAWG(\rev{T}) = (\rev{V_0}, \rev{E_0})$.
  It is known that $|V_0| = |\rev{V_0}| = \mu_T$,
  $|E_0| = e_T^r$ and $|\rev{E_0}| = e_T^\ell$
  (see~\cite{blumer:bbhme:1987complete} and~\cite{raffinot:2001:max:repeats}).
  Let $V_1$ and $V_2$ be the set of type-1 and type-2 nodes in
  $\LCDAWG(T)$, respectively.
  Clearly, $V_1 \cap V_2 = \emptyset$,
  $V = V_1 \cup V_2$, and $V_1 = V_0$.
  Let $\Beqc{x} \in V_1$ and $u = value({\Beqc{x}})$.
  Note that $u$ is a maximal repeat of $T$.
  For any character $a \in \Sigma$ such that $au$ is a substring of $T$,
  clearly $au$ is a left-extension of $u$.
  By the definition of $\LCDAWG(T)$,
  it always has a (type-1 or type-2) node which corresponds to $au$.
  Hence $|V_2| \leq e_T^\ell$.
  This implies $|V| = |V_1| + |V_2| = O(\mu_T + e_T^\ell)$.
  Since each type-2 node is non-branching,
  clearly $|E| = O(e_T^r + e_T^\ell) =O(\tilde e_T)$.
  \qed 
\end{proof}

\begin{corollary}
  For any string of $T$ over a constant alphabet,
  $|V| = O(\mu_T + e_T^r)$ and $|E| = O(e_T^r)$,
  where $\LCDAWG(T) = (V, E)$.
\end{corollary}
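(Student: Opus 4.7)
The plan is to derive the corollary directly from Lemma~\ref{lemma:LCDAWG_size} by bounding $e_T^\ell$ in terms of $\mu_T$ and $e_T^r$ under the assumption that $\sigma = O(1)$. From the lemma we already have $|V| = O(\mu_T + e_T^\ell)$ and $|E| = O(\tilde e_T) = O(e_T^r + e_T^\ell)$, so it suffices to argue that $e_T^\ell = O(\mu_T)$ and $\mu_T = O(e_T^r)$.

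First I would observe that every left-extension of a maximal repeat $u$ has the form $au$ for some $a\in\Sigma$, so each maximal repeat contributes at most $\sigma$ left-extensions. Summing over all maximal repeats gives $e_T^\ell \leq \sigma \mu_T$, which equals $O(\mu_T)$ for constant $\sigma$. Plugging this into the lemma immediately yields $|V| = O(\mu_T + e_T^\ell) = O(\mu_T) = O(\mu_T + e_T^r)$, which is the required bound on the vertex count.

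Next, for the edge bound I would use the fact that every internal node of $\CDAWG(T)$ corresponds to a maximal repeat of $T$ and, being branching, has out-degree at least two. Since the total out-degree summed over all nodes of $\CDAWG(T)$ equals $|E_0| = e_T^r$, this gives $\mu_T = O(e_T^r)$. Combining this with the inequality $e_T^\ell = O(\mu_T)$ established above, we obtain $|E| = O(e_T^r + e_T^\ell) = O(e_T^r + \mu_T) = O(e_T^r)$, completing the proof.

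There is no real obstacle here; the argument is essentially bookkeeping on top of Lemma~\ref{lemma:LCDAWG_size}. The only point requiring a moment of care is the claim $\mu_T = O(e_T^r)$, which relies on the structural fact (already used implicitly in the excerpt when it is stated that type-1 nodes are branching except for the sink) that every maximal repeat induces a branching CDAWG node. Once that is acknowledged, the two inequalities $e_T^\ell \leq \sigma \mu_T$ and $\mu_T \leq e_T^r/2 + O(1)$ together with Lemma~\ref{lemma:LCDAWG_size} yield the corollary.
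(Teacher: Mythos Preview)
Your proposal is correct and follows essentially the same route as the paper: both arguments rest on the two inequalities $e_T^\ell \le \sigma\,\mu_T$ and $\mu_T \le e_T^r$, and then invoke Lemma~\ref{lemma:LCDAWG_size} under $\sigma = O(1)$. Your write-up simply unpacks the justification for $\mu_T \le e_T^r$ a bit more (via the branching property of type-1 nodes), which the paper asserts without elaboration.
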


\begin{proof}
  It clearly holds that $\mu_T \geq e_T^\ell / \sigma$
  and $e_T^r \geq \mu_T$.
  Thus we have $e_T^\ell \leq \sigma e_T^r$.
  The corollary follows from Lemma~\ref{lemma:LCDAWG_size}
  when $\sigma = O(1)$.
  \qed
\end{proof}

\subsection{Constructing type-2 nodes and edge suffix links}

\begin{lemma} \label{lem:computing_type-2}
  Given $\CDAWG(T)$ for a text $T$,
  we can compute all type-2 nodes of $\LCDAWG(T)$ in
  $O(\tilde e_T \log \sigma)$ time.
\end{lemma}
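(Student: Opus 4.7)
The plan is to process each type-1 node $u = [x]$ of $\CDAWG(T)$ in turn, enumerate the left-extensions of $x$, and for each left-extension $ax$ that does not correspond to an existing type-1 node, insert a type-2 node at the position where the path spelling $ax$ ends inside an edge. The key budget insight is that $\sum_x |\{a : ax \in \mathrm{Substr}(T)\}| = e_T^\ell$, so if each left-extension can be processed in $O(\log\sigma)$ amortized time alongside $O(e_T^r)$ work on the CDAWG edges themselves, the total cost matches the claimed bound.

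First I would invert the $O(\mu_T)$ suffix links of $\CDAWG(T)$ in linear time so that at each type-1 node $u$ we have the list of reverse-suffix-link children, i.e.\ the nodes $v$ with $\slink(v) = u$. At the same time I would equip each type-1 node with a balanced-BST index of its out-edges keyed by the first character of the edge label, enabling $O(\log\sigma)$ child-by-character lookups. The preprocessing phase thus runs in $O(e_T^r \log\sigma)$ time.

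Next, for each type-1 node $u = [x]$ I would enumerate the set $L(x) = \{ a \in \Sigma : ax \in \mathrm{Substr}(T) \}$ of left-extension characters by combining two complementary sources: those reverse-suffix-link children $v$ of $u$ with $|value(v)| = |x|+1$ yield left-extensions whose endpoints are already type-1 nodes, while the remaining left-extensions are extracted from the in-edges of $u$ together with the suffix links of their source nodes. For each enumerated $ax$ that is not a type-1 node, its hosting edge in $\CDAWG(T)$ can be located without retraversal from the root: ascend once via $\slink(u) = u'$, then descend from $u'$ using the character-indexed outgoing edge starting with $a$, which costs $O(\log\sigma)$; split that edge at offset $|ax| - |value(u')|$ and install the new type-2 node together with its suffix link to $u$. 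Aggregating the preprocessing, enumeration, and edge-splitting costs gives $O(e_T^r \log\sigma + e_T^\ell \log\sigma) = O(\tilde e_T \log\sigma)$.

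The main obstacle is establishing the enumeration correctly: I need a clean argument that the combination of reverse suffix links (for left-extensions that are themselves maximal repeats) and the in-edges of $u$ (for the remaining left-extensions) together covers $L(x)$ exactly once per pair $(u,a)$, so that the aggregate cost is bounded by $e_T^\ell$ rather than by a potentially larger over-counting. A secondary subtlety is that inserting several type-2 nodes onto the same original CDAWG edge should not degrade the location cost of later insertions; this is handled by maintaining the edge decomposition incrementally so that each descent still consults only the $O(\log\sigma)$ character index at the current top endpoint and then walks over newly installed type-2 nodes whose combined cost is absorbed into their own insertion budget.
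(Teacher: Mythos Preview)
Your location step is broken. For a type-1 node $u=[x]$ and a left-extension character $a$, you propose to ``ascend once via $\slink(u)=u'$, then descend from $u'$ using the outgoing edge starting with $a$.'' But $\slink(u)$ leads to a node $u'$ whose representative $value(u')$ is a proper \emph{suffix} of $x$; descending from $u'$ by $a$ spells $value(u')\cdot a\cdots$, which has nothing to do with $ax$. To reach the locus of $ax$ you would need a Weiner-link-type mechanism (extend on the left), which the bare CDAWG does not provide, so the claimed $O(\log\sigma)$ per insertion is unsupported. Your enumeration of $L(x)$ is also shaky: an in-edge $(p,\alpha,u)$ witnesses that $value(p)\cdot\alpha$ lies \emph{inside} the class $[u]$ (it is a suffix of $value(u)$), not a left-extension of $value(u)$, and it is not explained how combining these with suffix links of their sources yields the missing characters~$a$.

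The paper sidesteps both problems by reversing the viewpoint. Instead of fixing $u$ and hunting for the edge hosting each $a\cdot value(u)$, it fixes each CDAWG edge $e=(p,\lab(e),q)$ and determines which type-2 nodes fall on~$e$: go to $p'=\slink(p)$ and trace $\lab(e)$ downward; every type-1 node met on this path~$Q$ corresponds to a type-2 node to be inserted on~$e$ at the same offset. Correctness is immediate from the definition (a type-2 node for $aw$ sits on $e$ exactly when the type-1 node for $w$ sits on the suffix-link image~$Q$). The total number of CDAWG edges traversed over all such $Q$'s is $O(e_T^r + e_T^\ell)=O(\tilde e_T)$---one start edge per $e$ plus one continuation per type-2 node created---and each step costs $O(\log\sigma)$ for the branch by first character, giving the bound directly without any enumeration of left-extension alphabets.
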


\begin{proof}
  We create a copy $G$ of $\CDAWG(T)$.
  For each edge $(u, x, v)$ of $\CDAWG(T)$,
  we compute node $u' = \suflink(u)$
  and  the path $Q$ that spells out $x$ from $u'$.
  The number of type-1 nodes in this path $Q$
  is equal to the number of type-2 nodes
  that need to be inserted on edge $(u, x, v)$,
  and hence we insert these nodes to $G$.
  After the above operation is done for all edges,
  $G$ contains all type-2 nodes of $\LCDAWG(T)$.
  Since there always exists such a path $Q$,
  to find $Q$ it suffices to check the first characters
  of out-going edges. Hence we need only $O(\log \sigma)$ time
  for each node in $Q$.
  Overall, it takes $O(\tilde e_T \log \sigma)$ time.
  \qed
\end{proof}


The above lemma also indicates the notion of the following \emph{edge suffix links}
in $\LCDAWG(T)$ which are virtual links, and will not be actually created in the construction. 

\begin{definition}[Edge suffix links]\rm 
  \label{def:e-suf}
  For any edge $e$ with $\slen(e) \ge 2$, $\esuf(e) = (e_1, \ldots, e_k )$ is the path, namely a list of edges, from $e_1.hi=\suflink(e.hi)$ to $e_k.lo$ that can be reachable from $e_1.hi$ by scanning $\lab(e)$.

\end{definition}

Edge suffix links have the following properties.

\begin{lemma} \label{lem:esuf:nonbranch}
  For any edge $e$ such that $\slen(e) \ge 2$ and its edge suffix link $\esuf(e)=(e_1, \ldots, e_k )$, (1) both $e_1.hi$ and $e_k.lo$ are type-1 nodes, and (2)
  all nodes in the path $e_1.lo = e_2.hi, \ldots, e_{k-1}.lo = e_k.hi$ are type-2 nodes.
\end{lemma}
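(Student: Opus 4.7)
The plan is to split the lemma into its two claims and to base the proof on a ``suffix-link commutation'' property that links $\esuf(e)$ to a single edge of the underlying $\CDAWG(T)$.

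For (1), the fact that $e_1.hi = \slink(e.hi)$ is type-1 is immediate from the construction of $\LCDAWG(T)$: every suffix link in $\LCDAWG(T)$ targets a type-1 node. Indeed, the CDAWG suffix links are defined between CDAWG nodes (which are exactly the type-1 nodes of $\LCDAWG(T)$), and each suffix link introduced for a type-2 node in construction step~1 is explicitly defined to target the type-1 node $[u]$. The remaining half of (1)---that $e_k.lo$ is type-1---will be established jointly with (2) by showing the stronger statement $e_k.lo = \slink(e.lo)$.

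For (2) and the second half of (1), the key step is the following commutation property: for every edge $e$ in $\LCDAWG(T)$ with $\slen(e) \geq 2$, the scan of $\lab(e)$ from $\slink(e.hi)$ corresponds, in the underlying $\CDAWG(T)$, to a single CDAWG edge whose endpoint (or a prefix position on it) is $\slink(e.lo)$. I would prove this by case analysis on the types of $e.hi$ and $e.lo$: $e$ is itself a piece of some original CDAWG edge $(p, X, q)$, and in each combination of types I would identify the corresponding single CDAWG edge emanating from $\slink(e.hi)$ whose label has $\lab(e)$ as a prefix. Given this commutation, every intermediate node of $\esuf(e)$ in $\LCDAWG(T)$ must lie strictly inside one CDAWG edge, and since type-1 nodes are precisely the CDAWG nodes (so they occur only at CDAWG edge endpoints), such intermediates can only be type-2 nodes inserted in construction step~1. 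The final endpoint of the scan is then $\slink(e.lo)$ itself, a type-1 node by the argument in (1).

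The main obstacle I anticipate is the subcase where $e.hi$ is type-1 but $\slink(e.hi) = [p']$ with $|p'| < |p| - 1$ (that is, the suffix $p[2..]$ lies in the same CDAWG class as $p$), since then the ``shift by one character'' intuition for suffix links breaks. In this regime, I would resolve the commutation by arguing that any type-1 intermediate position $[p' \cdot \lab(e)[1..j]]$ encountered during the scan would force, via the definition of type-2 nodes as left-extensions of maximal repeats, a type-2 insertion strictly between the endpoints of $e$ on the original CDAWG edge $(p, X, q)$, contradicting $e$'s being a single LCDAWG edge of length $\slen(e)$.
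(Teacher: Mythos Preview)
Your overall strategy---show that $\esuf(e)$ lies inside a single edge of the underlying $\CDAWG(T)$, hence every internal node is type-2, and that the endpoints are suffix-link targets, hence type-1---is the same as the paper's. The paper phrases~(2) contrapositively (a type-1 intermediate would force a node strictly between $e.hi$ and $e.lo$), which is exactly your ``forced type-2 insertion'' contradiction.

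There is, however, a genuine gap in your plan for the second half of~(1). The ``stronger statement'' $e_k.lo = \slink(e.lo)$ is \emph{false} in general. When $e.lo$ is type-1, let $ax$ be the shortest string in the class of $e.hi$; then the scan from $\slink(e.hi)$ spells $x\cdot\lab(e)$, and this string may still belong to the equivalence class $[e.lo]$, in which case $e_k.lo = e.lo$ rather than $\slink(e.lo)$. A concrete instance is $T=aab\$$: the $\LCDAWG$ edge $e$ from $[a]$ to the sink with label $b\$$ has $\esuf(e)$ ending at the sink itself, i.e.\ $e_k.lo = e.lo$. The paper handles this by a case split on the type of $e.lo$: if $e.lo$ is type-2 then indeed $e_k.lo=\slink(e.lo)$; if $e.lo$ is type-1 then $e_k.lo\in\{e.lo,\slink(e.lo)\}$, and both alternatives are type-1. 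Your proof sketch would go through once you weaken the target to $e_k.lo\in\{e.lo,\slink(e.lo)\}$; but as written, the case analysis you propose for the commutation property cannot establish $e_k.lo=\slink(e.lo)$ in the subcase where $e.lo$ is type-1 and the shortened string stays in its class.

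A minor related point: your parenthetical ``(or a prefix position on it)'' is never realized, since by definition $e_k.lo$ is an $\LCDAWG$ node and the lemma asserts it is type-1, hence a genuine CDAWG node and not an implicit position.
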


\begin{proof}
  From the definition of edge suffix links,
  we have $e_1.hi = \suflink(e.hi)$ and
  the path from $e_1.hi$ to $e_k.lo$ spells out $\lab(e)$.
  (1) By the definitions of type-2 nodes and
  edge suffix links, $e_1.hi$ is always of type-1.
  Hence it suffices to show that $e_k.lo$ is of type-1.  
  There are two cases:
  (a) If $e.lo$ is a type-2 node,
  then by the definition of type-2 nodes,
  $e_k.lo$ must be the node pointed by $\suflink(e.lo)$.
  Therefore, $e_k.lo$ is a type-1 node.
  (b) If $e.lo$ is a type-1 node,
  then let $ax$ be the shortest string represented by $e.hi$
  with $a \in \Sigma$ and $x \in \Sigma^*$.
  Then, string $x \cdot \lab(e)$ is spelled out by a path
  from the source to $e_1.hi, \ldots, e_k.lo$,
  where either $e_k.lo = e.lo$ or $e_k.lo = \slink(e.lo)$.
  Since $e.lo$ is of type-1, $\slink(e.lo)$ is also of type-1.
  (2) If there is a type-1 node $u$ in the path $e_2.hi, \ldots, e_{k-1}.lo$,
  then there has to be a (type-1 or type-2) node $v$ between
  $e.hi$ and $e.lo$, a contradiction. 
  \qed 
\end{proof}

Lemma~\ref{lem:esuf:nonbranch}
says that the label of any edge $e = (u, c, v)$ with $\slen(e) \ge 2$ can be represented by a path $p = (e_1, \ldots, e_k) = \esuf(e)$. 
In addition, since the path $p$ includes type-1 nodes only at the end points
and since type-2 nodes are non-branching,
$p$ is uniquely determined by a pair of $(\slink(u), c)$.
We can compute all edges $e_i \in p$ for $1 \le i \le k$ 
in $O(k + \log \sigma)$ per query, as follows.
Firstly, we compute $p.hi = \suflink(u)$ and then select the out-going edge $e_1$ starting with the character $c$ in $O(\log \sigma)$ time. 
Next, we blindly scan the downward path from $e_1$ while the lower end of the current edge $e_i$ has type-2. This scanning terminates when we reach an edge $e_k$ such that $e_k.lo$ is of type-1.

\subsection{Construction of the SLP for  L-CDAWG}
\label{subsec:algo:grammar}

We give an SLP of size $O(\tilde e_T)$ which represents $T$ and all edge labels of $L = \LCDAWG(T)$ based on the jump links.

\mysubsubsection{Jumping from an edge to a path}
First, we define \emph{jump links}, by which  we can jump from a given edge $e$ with $\slen(e) \ge 2$ to the path consisting of at least two edges, and having the same string label.
Although our jump link is based on that of LSTries~\cite{epifanio:mignosi:grossi:crochemore:2016lst}, we need a new definition since a path in $\CDAWG(T)$ (and hence in $\LCDAWG(T)$) cannot be uniquely determined by a pair of nodes,
unlike $\STree(T)$ (or $\LSTrie(T)$).

\begin{definition}[Jump links]\rm 
  For an edge $e$ with $\slen(e) \ge 2$ and $\esuf(e)=(e_1, \ldots, e_k)$, $\id{jump}(e)$ is recursively defined as follows:
\begin{enumerate}
\item $\id{jump}(e) := \id{jump}(e_1)$ if $k = 1$ (thus $\esuf(e)=( e_1 )$), and 
\item $\id{jump}(e) := (e_1, \ldots, e_k)$ if $k \ge 2$.

\end{enumerate}
\end{definition}

Note that $\lab(e)$ equals $\lab(e_1) \cdots \lab(e_k)$ for $jump(e)=(e_1, \ldots, e_k)$.
\begin{lemma} \label{lem:jump:termination}
  For any edge $e$ with $slen(e)\ge 2$, $\id{jump}(e)$ consists of at least two edges.
\end{lemma}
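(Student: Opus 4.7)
The plan is to establish termination of the recursive definition of $\jumplink$: once the non-recursive case fires, it returns a path $(e_1, \ldots, e_k)$ with $k \ge 2$ by definition, so the lemma reduces to showing that the recursion reaches this case after finitely many steps and never gets stuck on an edge where $\esuf$ would be undefined.

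First I would record a simple invariant of the recursive case $\esuf(e) = (e_1)$: the edge $e_1$ satisfies $\slen(e_1) = \slen(e) \ge 2$ and $e_1.hi = \suflink(e.hi)$. Hence each recursive step preserves the label length while strictly ascending the (finite) suffix link tree of $\LCDAWG(T)$. Provided no step is undefined, finiteness of this tree already forces termination.

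The main obstacle is to rule out the degenerate case in which a recursive call requests $\esuf$ of an edge whose upper endpoint is the source, since $\suflink$ of the source is undefined. To handle this, I would prove the following structural fact: in $\LCDAWG(T)$ every out-edge of the source has $\slen = 1$. This is immediate from the type-2 insertion rule applied to the type-1 node $[\varepsilon]$: whenever the out-edge of the source in $\CDAWG(T)$ starting with a character $a \in \Sigma$ is not atomic, the singleton string $a$ triggers the insertion of a type-2 node that splits that edge so that its first piece has length one.

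Combining these observations, let $e^*$ be the first edge encountered in the recursion for which $\suflink(e^*.hi)$ equals the source; such an $e^*$ exists by finiteness of the suffix link tree. Then $\esuf(e^*)$ is a path from the source spelling $\lab(e^*)$, a string of length $\ge 2$, along edges that are all atomic by the structural fact, hence contains at least two edges. Therefore the non-recursive case of $\jumplink$ applies at $e^*$, and $\jumplink(e) = \esuf(e^*)$ has length $\ge 2$, as required.
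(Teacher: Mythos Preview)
Your argument is correct and matches the paper's: both hinge on the fact that every out-edge of the source in $\LCDAWG(T)$ is atomic (the paper phrases this as ``for any character $c$ which appears in $T$, there is a node representing $c$ as a child of the source''), so the recursion must hit case~2 no later than the step where $\suflink(e^*.hi)$ is the source. The paper packages this as a proof by contradiction while you give a direct termination argument, but the substance is the same.

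One small slip to fix: you write that $\esuf(e^*)$ runs ``along edges that are all atomic by the structural fact,'' but your structural fact only guarantees that the \emph{first} edge of that path (the out-edge of the source) is atomic. That is already sufficient---a path of string length $\ge 2$ whose first edge has length $1$ must contain at least two edges---so the conclusion stands, but the sentence as written overclaims.
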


\begin{proof}
  Assume on the contrary that
  $\id{jump}(e) = e'$ for some edge $e'$.
  This implies $\slen(e') \geq 2$.
  By definition, $e'.hi$ is a proper suffix of $e.hi$,
  namely, there exists an integer $k \geq 1$ such that
  $\suflink^k(e.hi) = e'.hi$.
  For any character $c$ which appears in $T$,
  there is a (type-1 or type-2) node
  which represents $c$ as a child of the source of $\LCDAWG(T)$.
  This implies that there is an out-going edge $e''$ of length $1$
  from the source representing the first character of $e.hi$.
  This contradicts that $\id{jump}(e)$ only
  contains a single edge $e'$ with $\slen(e') \geq 2$.
  \qed
\end{proof}

\begin{theorem}
  \label{lemma:jump:time}  
   For a given $\LCDAWG(T)$, there is an algorithm that computes all jump links in $O(\tilde e_T \log \sigma)$ time. 
\end{theorem}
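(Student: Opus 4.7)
The plan is to compute $\jumplink(e)$ for every edge $e$ with $\slen(e) \ge 2$ by recursion with memoization, using the fact that the recursion in the definition of $\jumplink$ always descends along suffix links. For a given such $e$, the first step is to locate the initial edge $e_1$ of $\esuf(e)$: take $u' = \suflink(e.hi)$ and pick the out-edge of $u'$ whose label starts with the character $c = \lab(e)[1]$. This is well-defined because $e.hi$ cannot be the source when $\slen(e) \ge 2$ (the proof of Lemma~\ref{lem:jump:termination} already observes that every out-edge of the source has length one), and the lookup costs $O(\log \sigma)$ by binary search on the out-edges of $u'$ sorted by first character.

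Once $e_1$ is in hand, distinguish the two branches of the definition in $O(1)$ time by comparing $\slen(e_1)$ with $\slen(e)$. If $\slen(e_1) = \slen(e)$, then $\esuf(e) = (e_1)$ and we set $\jumplink(e) := \jumplink(e_1)$ via a memoized recursive call; since $e_1.hi = \suflink(e.hi)$ has strictly smaller string depth than $e.hi$, the recursion is well-founded, and Lemma~\ref{lem:jump:termination} guarantees that it eventually terminates in the branch $k \ge 2$. Otherwise $\slen(e_1) < \slen(e)$, so $k \ge 2$, and we store $\jumplink(e)$ compactly as the pair $(e_1, \slen(e))$; by Lemma~\ref{lem:esuf:nonbranch}(2) all intermediate nodes of the jump path are type-2 and hence non-branching, so the full sequence of edges is uniquely reconstructible from the starting edge $e_1$ together with the total label length.

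With memoization, each of the $O(\tilde e_T)$ edges (Lemma~\ref{lemma:LCDAWG_size}) is treated exactly once: the lookup of $e_1$ costs $O(\log \sigma)$ and the remaining bookkeeping is $O(1)$, yielding the overall bound $O(\tilde e_T \log \sigma)$. Equivalently, one may simply process the edges in nondecreasing order of string depth of $e.hi$, which is a valid topological order for the dependencies.

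The step I expect to be the main obstacle is justifying the compact representation $(e_1, \slen(e))$ used in the $k \ge 2$ case: it rests entirely on Lemma~\ref{lem:esuf:nonbranch}(2), without which the jump path would not be uniquely recoverable from its first edge and total length, and both the space of the stored links and the per-edge running time would degrade. Given Lemma~\ref{lem:esuf:nonbranch}(2), the rest of the argument is a direct traversal of the suffix-link backbone of $\LCDAWG(T)$.
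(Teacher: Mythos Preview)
Your proposal is correct and follows essentially the same approach as the paper: both compute, for every edge $e$ with $\slen(e)\ge 2$, the first edge of $\jumplink(e)$ by a memoized recursion that chases suffix links (locating $e_1$ via an $O(\log\sigma)$ lookup at $\suflink(e.hi)$), and both rely on Lemma~\ref{lem:esuf:nonbranch}(2) to argue that the full jump path is determined by its first edge. Your write-up is in fact a bit more explicit than the paper's in distinguishing the base case $k\ge 2$ from the recursive case $k=1$ and in noting why $e.hi$ cannot be the source, but the underlying algorithm and analysis are the same.
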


\begin{proof}
  \newcommand{\xx}{Y}
  
  We explain how to obtain $jump(e)$ for an edge $e$ with $slen(e) \geq 2$.
  For all edge $e$ with $slen(e) \geq 2$, we manage a pointer to the first edge $e^\prime$ of $jump(e)$ by $P[e]=e^\prime$.
  We initially set $P[e]=\epsilon$ for all $e$.
  For all nodes $e$ with $slen(e) \geq 2$, let $u$ be an outgoing edge of $\slink(e.hi)$ with the same label character of $e$.
  We check whether $P[e]=\epsilon$ and, if so, we recursively compute $P[u]$, and then set $P[e]=P[u]$.
  In this way all $P[e]$ can be computed in $O(\tilde e_T \log \sigma)$ time in total, where the $\log \sigma$ is needed for selecting the out going edge. 
  From Lemma~\ref{lem:esuf:nonbranch}, since there does not exist branching edge on each jump link, $jump(e)$ can be easily obtained from $P[e]$ by traversing the path until encountered a type-1 node. \qed
\end{proof}
  
\mysubsubsection{An SLP for the L-CDAWG}
We build an SLP which represents all edge labels in $\LCDAWG(T) = (V, E)$
based on jump links.
For each edge $e$, let $X(e)$ denote
the variable which generates the string label $\lab(e)$.
Let $E = \{e_1, \ldots, e_s\}$.
For any $e_i \in E$ with $\slen(e_i) = 1$,
we construct a production $X(e_i) \to c$
where $c \in \Sigma$ is the label.
For any $e_i \in E$ with $\slen(e_i) \geq 2$,
let $\id{jump}(e_i) = (e'_1, \dots, e'_k)$.
We construct productions
$X(e_i) \to X(e'_1)Y_1$, $Y_1 \to X(e'_2) Y_2$, \ldots, $Y_{k-3} \to X(e'_{k-2})Y_{k-2}$, and $Y_{k-2} \to X(e'_{k-1}) X(e'_{k})$.
We call a production whose left-hand size is $Y_{i}$
an \emph{intermediate} production.
It is clear that $X(e_i)$ generates $\lab(e)$
and we introduced $k-1$ productions.
If there is another edge $e_j$~($i \neq j$)
such that $\id{jump}(e_j) = (e'_1, \dots, e'_k)$,
then we construct a new production $X(e_j) \to X(e'_1)Y_1$
and reuse the other productions.
Let $p$ be the path that spells out the text $T$.
We create productions which generates $T$
using the same technique as above for this path $p$.
Overall, the total number of
intermediate productions
is linear in the number of type-2 nodes in $\LCDAWG(T)$.
Since there are $O(|E|)$ non-intermediate productions,
this SLP consists of $O(\tilde e_T)$ productions.

Now, we have the main result of this subsection. 

\begin{theorem}
  \label{lemma:slp:correct:time}  
  For a given $\LCDAWG(T)$, there is an algorithm that constructs an SLP 
   which represents all edge labels in $O(\tilde e_T \log \sigma)$ time. 
\end{theorem}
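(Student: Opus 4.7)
The plan is to first apply Theorem~\ref{lemma:jump:time} to precompute $\id{jump}(e)$ for every edge $e$ with $\slen(e) \ge 2$ in $O(\tilde e_T \log \sigma)$ time, and then convert each jump link into a small chain of SLP rules in which the intermediate rules are shared across any two edges whose jump paths pass through a common type-2 node.

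I would then construct the grammar in three stages. First, for every edge $e$ with $\slen(e) = 1$, emit the terminal rule $X(e) \to c$ where $c = \lab(e)$; there are $O(\tilde e_T)$ such rules, each produced in $O(1)$ time. Second, for every type-2 node $v$, use that $v$ is non-branching and that by Lemma~\ref{lem:esuf:nonbranch} every internal node on the downward path from $v$ to the first type-1 node is also non-branching: call that path $(f_1, \ldots, f_j)$ and introduce a single intermediate variable $Y(v)$ with rule $Y(v) \to X(f_1) X(f_2)$ if $j = 2$ and $Y(v) \to X(f_1) Y(f_1.lo)$ if $j > 2$ (and no rule at all when $j = 1$, in which case I will refer directly to $X(f_1)$). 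Third, for every edge $e$ with $\slen(e) \ge 2$ and $\id{jump}(e) = (e'_1, \ldots, e'_k)$, which satisfies $k \ge 2$ by Lemma~\ref{lem:jump:termination}, emit $X(e) \to X(e'_1) X(e'_2)$ when $k = 2$ and $X(e) \to X(e'_1) Y(e'_1.lo)$ when $k \ge 3$. Finally, apply the analogous splitting to the source-to-sink path that spells out $T$; this path has $O(\mu_T + e_T^\ell) = O(\tilde e_T)$ edges.

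For correctness, a straightforward induction on $k$ (respectively $j$) shows that $X(e)$ derives $\lab(e'_1) \cdots \lab(e'_k) = \lab(e)$ and that $Y(v)$ derives the concatenation of the edge labels on the chain rooted at $v$. For the size bound, there are $O(\tilde e_T)$ rules of the form $X(e) \to \cdots$ (one per edge) and $O(|V_2|) = O(e_T^\ell)$ rules of the form $Y(v) \to \cdots$ (one per type-2 node), hence $O(\tilde e_T)$ rules in total. The running time is $O(\tilde e_T \log \sigma)$ for the jump link computation plus $O(1)$ amortized per rule, so the whole construction takes $O(\tilde e_T \log \sigma)$ time.

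The main obstacle I anticipate is making the sharing work without scanning each jump path in full, since naively walking the whole path $(e'_1, \ldots, e'_k)$ for every edge $e$ could cost $\Omega(k)$ and blow up the total time. The crucial observation that sidesteps this is that each type-2 node $v$ is non-branching and therefore determines a unique continuation and a unique rule $Y(v)$, regardless of how many jump paths pass through $v$; hence reading only the pair $(e'_1, e'_1.lo)$ from the pointer $P[e]$ built in the proof of Theorem~\ref{lemma:jump:time} suffices to emit the rule for $X(e)$, and the $Y(v)$ rules can be produced by a single pass over the type-2 nodes, so no jump path is ever traversed twice.
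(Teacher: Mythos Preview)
Your proposal is correct and follows essentially the same approach as the paper: compute all jump links via Theorem~\ref{lemma:jump:time}, then for each edge $e$ with $\slen(e)\ge 2$ emit a right-linear chain of productions along $\id{jump}(e)=(e'_1,\dots,e'_k)$, sharing the intermediate variables so that their total number is bounded by the number of type-2 nodes. Your device of indexing the intermediate variable as $Y(v)$ by the type-2 node $v$ is exactly the paper's reuse mechanism made explicit (the paper writes $Y_1,\dots,Y_{k-2}$ per edge and then remarks that they are reused across edges with the same jump path), and your observation that only $P[e]$ and $e'_1.lo$ are needed to emit the rule for $X(e)$ is a nice justification of the $O(1)$-per-rule cost that the paper leaves implicit; the only minor slip is that the well-foundedness of the grammar requires an induction on $\slen(e)$ rather than on $k$, since one must also know that each $X(e'_i)$ already derives $\lab(e'_i)$.
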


\begin{proof}
  \newcommand{\xx}{Y}
  By the above algorithm,
  if jump links are computed, 
  we can obtain an SLP which represents all edge labels in $O(\tilde e_T)$ time. 
  From Theorem~\ref{lemma:jump:time}, we can compute all jump links in $O(\tilde e_T \log \sigma)$ times.
  Overall, the total time of this algorithm is  $O(\tilde e_T \log \sigma)$. 
  \qed
\end{proof}

  Fig.~\ref{fig:slp_lcdawg} shows $\LSTrie(T)$ and $\LCDAWG(T)$
enhanced with the SLP for string $T = abcdbcda\$$.

We associate to each edge label the corresponding variable of the SLP. 
By applying algorithms of Gasieniec \emph{et al.}~\cite{Gasieniec:kolpakov:DCC:2005} (in Proposition~\ref{Gasieniec}) and Bille \emph{et al.}~\cite{Bille:Landau:Raman:Sadakane:Rao:Weimann:2015} (in Proposition~\ref{Bille}), we can show the following theorems.

\begin{theorem} 
  For a text $T$,  $\LCDAWG(T)$ can support 
  pattern matching for a pattern $P$ of length $m$ in $O(m \log \sigma + occ)$ time. 
\end{theorem}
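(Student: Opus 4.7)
The plan is to imitate the standard $\CDAWG(T)$ pattern-matching walk on $\LCDAWG(T)$ while replacing direct access to edge labels by the SLP built in Section~\ref{subsec:algo:grammar}. Throughout the walk, I would maintain a locus $(e, j)$ consisting of the currently traversed edge $e$ and an offset $0 \le j \le \slen(e)$, together with a handle into the SLP-expansion data structure of Proposition~\ref{Gasieniec} applied to the variable $X(e)$. Matching then alternates between a \emph{node dispatch}, when $j = \slen(e)$ and we sit at some node $u$, and an \emph{edge descent}, when $0 \le j < \slen(e)$.

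For a node dispatch with next pattern character $c = P[i]$, the outgoing edges of $u$ are stored together with their explicit first characters (the middle component of the triple $(u, c, v)$ introduced in step~2 of Section~\ref{algo:def_lcdawg}). A binary search over these at most $\sigma$ first characters either finds the unique edge $e'$ starting with $c$ in $O(\log \sigma)$ time, at which point we reset the SLP handle to the start of $\sig F(X(e'))$, or reports a mismatch. For an edge descent, we advance the SLP handle by one character to obtain the $(j+1)$-st symbol of $\lab(e)$ and compare it with $P[i]$; by Proposition~\ref{Gasieniec} this costs $O(1)$.

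Each edge descent consumes exactly one character of $P$, and each node dispatch is followed by an edge descent that also consumes one character of $P$, so both operations occur at most $m$ times. Hence the matching phase finishes in $O(m \log \sigma + m) = O(m \log \sigma)$ time and lands at a unique locus representing the occurrences of $P$ in $T$. For the enumeration phase, I would invoke the standard CDAWG occurrence-reporting machinery: augmenting each type-1 node with the number of sink-reaching paths and with a pointer into a list of representative end-positions, so that a DFS in the subgraph of $\LCDAWG(T)$ reachable from the locus produces all $occ$ occurrences in $O(occ)$ time; these auxiliary tables fit within the $O(\tilde e_T \log n)$-bit budget because both their row count and row width are $O(\tilde e_T)$ and $O(\log n)$ respectively.

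The main obstacle is the asymmetry in Proposition~\ref{Gasieniec}: its $O(1)$-per-character guarantee applies only when we expand $\sig F(X(e))$ from its beginning. This is precisely compatible with the walk because the locus always enters a new edge at offset $j = 0$, and here the presence of the type-2 nodes becomes essential: they force every transition between consecutive physical edges of $\LCDAWG(T)$ to occur at an actual node of the graph, so the SLP handle can be safely reinitialized at each edge boundary without any hidden per-edge overhead. Once this interface between the graph walk and the SLP expander is set up, the time bound follows immediately from the accounting above.
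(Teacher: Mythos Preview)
Your argument is correct and is essentially the same approach as the paper's: both reduce to the standard $\CDAWG$ walk, using Proposition~\ref{Gasieniec} to read each edge label sequentially from its start in $O(1)$ per character, with $O(\log\sigma)$ branching at nodes, and then invoking the known $O(\occ)$ reporting for CDAWGs. Your write-up is simply a much more explicit unpacking of what the paper states in three sentences; the only quibble is that your final paragraph overstates the role of type-2 nodes (they \emph{add} edge boundaries rather than enable them, but since they are non-branching and each edge still has length $\ge 1$, your per-character accounting already absorbs their cost).
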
 

\begin{proof}
  From Proposition~\ref{Gasieniec},  
  any consecutive $m$ characters from the beginning of an edge  in $\LCDAWG(T)$ can be sequentially read in $O(m)$ time.
  $\CDAWG(T)$ can support pattern matching by traversing the path from the source with $P$ in $O(m \log \sigma + occ)$ time~\cite{blumer:bbhme:1987complete}.
  Since $\LCDAWG(T)$ contains the topology of  $\CDAWG(T)$, 
  it can also support pattern matching in $O(m \log \sigma + occ)$ time.
  \qed
\end{proof}

\begin{theorem} 
  For a text $T$ of length $n$,
  $\LCDAWG(T)$ has an SLP that derives $T$. 
  In addition, we can read any substring $T[i..i + m]$ can be read in $O(m + \log n)$ time. 
\end{theorem}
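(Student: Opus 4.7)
The plan is to verify both claims by inspecting what the SLP constructed in Section~\ref{subsec:algo:grammar} actually contains, and then applying Proposition~\ref{Bille} off the shelf.

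For the first claim, recall that in the construction of the SLP for $\LCDAWG(T)$ we explicitly singled out the path $p$ from the source to the sink that spells out $T$ and introduced productions derived from the jump-link decomposition of $p$ in the same fashion as for edge labels. Concretely, if $\id{jump}(p) = (e'_1,\ldots,e'_k)$ along the spelling of $T$, we create a chain of productions whose top-level nonterminal $X_T$ generates $\lab(e'_1)\cdots\lab(e'_k) = T$. Thus $X_T$ is a start symbol of the constructed SLP which derives $T$, establishing the first part of the theorem.

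For the second claim, I would appeal directly to Proposition~\ref{Bille} by Bille \emph{et al.} Since the SLP associated with $\LCDAWG(T)$ has size $g = O(\tilde e_T)$ (as already argued in the proof of Theorem~\ref{lemma:slp:correct:time}) and $X_T$ is the variable whose phrase $\sig F(X_T)$ is exactly $T$, the proposition supplies, after $O(g)$ preprocessing, a data structure of $O(g \log n) = O(\tilde e_T \log n)$ bits that extracts any $m$ consecutive characters of $\sig F(X_T)$ in $O(m + \log n)$ time. Taking $i$ as the starting position and returning $T[i..i+m]$ yields the claimed bound.

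The only thing to be careful about is that the variable generating $T$ must be exposed as an entry point of the SLP (rather than being internal to some edge-label derivation); this is immediate from the explicit construction of the path-$p$ productions noted above, so no additional work is required. Hence the main nontrivial content is already delegated to Proposition~\ref{Bille}, and the theorem follows by combining this with the existence of $X_T$.
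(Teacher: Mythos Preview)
Your proof is correct and follows essentially the same approach as the paper: both observe that the SLP construction of Section~\ref{subsec:algo:grammar} explicitly creates productions for the path $p$ spelling out $T$, yielding a variable that derives $T$, and then invoke Proposition~\ref{Bille} for the $O(m+\log n)$ substring extraction. One small notational slip: you write $\id{jump}(p)$, but $\id{jump}$ is defined only for single edges; the construction for $p$ simply chains the variables $X(e)$ along the edges of $p$ in the same manner as the intermediate productions, rather than applying $\id{jump}$ to $p$ itself.
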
 

\begin{proof}
  The text $T$ of $\LCDAWG(T)$ is represented by the longest path $p$ from the source to the sink. 
  Remembering $p$ makes it possible to read any position of $T$ by using the Proposition~\ref{Bille}.
  \qed 
\end{proof}

\subsection{The main result}
\label{subsec:const}

It is known that for a given string $T$ of length $n$
over an integer alphabet of size $n^{O(1)}$,
$\CDAWG(T)$ can be constructed in $O(n)$ time~\cite{Narisawa:Inenaga:Bannai:Takeda:Algorithmica}.
Combining this with the preceding discussions,
we obtain the main result of this paper.

\begin{theorem}
  For a text $T$ of length $n$, 
  $\LCDAWG(T)$ 
  supports pattern matching in $O(m \log \sigma + \occ)$ time
  for a given pattern of length $m$ and 
  substring extraction in $O(m + \log n)$ time
  for any substring of length $m$,
  and can be stored in $O(\tilde e_T \log n)$ bits of space
  (or $O(\tilde e_T)$ words of space).
  If $\CDAWG(T)$ is already constructed,
  then $\LCDAWG(T)$ can be constructed in $O(\tilde e_T\log \sigma)$
  total time.
  If $T$ is given as input,
  then $\LCDAWG(T)$ can be constructed in $O(n + \tilde e_T \log \sigma)$
  total time for integer alphabets of size $n^{O(1)}$.
  After $\LCDAWG(T)$ has been constructed,
  the input string $T$ can be discarded.
\end{theorem}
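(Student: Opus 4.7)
The plan is to assemble the theorem as an immediate corollary of the results already established in Section~\ref{sec:algo}. The query bounds on pattern matching and substring extraction are simply the conclusions of the two theorems stated just before the current theorem, so there is nothing new to prove for those two items.

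For the space bound, I would argue as follows. By Lemma~\ref{lemma:LCDAWG_size}, $\LCDAWG(T) = (V,E)$ has $|V| = O(\mu_T + e_T^\ell)$ nodes and $|E| = O(\tilde e_T)$ edges, each of which stores a constant number of pointers, a character, a length $\slen(e)$, and a reference to its associated SLP variable $X(e)$; each such field fits in $O(\log n)$ bits. The accompanying SLP built in Section~\ref{subsec:algo:grammar} has $O(\tilde e_T)$ productions (one $X(e_i)$ per edge plus the intermediate $Y_j$'s, whose total is bounded by the number of type-2 nodes, hence by $O(\tilde e_T)$), and the data structures of Propositions~\ref{Gasieniec} and~\ref{Bille} occupy $O(\tilde e_T \log n)$ bits on top of that SLP. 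Summing these contributions yields $O(\tilde e_T \log n)$ bits overall.

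For the construction bounds, assume first that $\CDAWG(T)$ is already available. Lemma~\ref{lem:computing_type-2} inserts all type-2 nodes in $O(\tilde e_T \log \sigma)$ time; Theorem~\ref{lemma:jump:time} computes all jump links in the same time; and Theorem~\ref{lemma:slp:correct:time} then produces the SLP in $O(\tilde e_T \log \sigma)$ time. Finally, the preprocessing of Propositions~\ref{Gasieniec} and~\ref{Bille} on an SLP of size $O(\tilde e_T)$ is linear in that size, hence absorbed. If instead $T$ itself is the input, we prepend the $O(n)$-time construction of $\CDAWG(T)$ for integer alphabets of size $n^{O(1)}$ due to Narisawa~\emph{et al.}, giving a total of $O(n + \tilde e_T \log \sigma)$.

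It remains to justify that the input string $T$ may be discarded. For this I would observe that the label of every edge of the original $\CDAWG(T)$ is reconstructible from $\LCDAWG(T)$ via its associated SLP variable, and that $T$ itself is spelled out along the path from source to sink that reaches each suffix; as explained in the substring-extraction theorem, this path is stored and coupled with the SLP access structure of Proposition~\ref{Bille}, so any $T[i..i+m]$ can be produced without reference to the original text. The only step that requires any care here, and which I regard as the main (mild) obstacle, is making sure that the $O(\tilde e_T)$ bound on SLP size really holds after introducing the intermediate variables $Y_j$; this is precisely the argument in Section~\ref{subsec:algo:grammar} that charges each $Y_j$ to a distinct type-2 node, so by Lemma~\ref{lemma:LCDAWG_size} the count remains $O(\tilde e_T)$.
\qed
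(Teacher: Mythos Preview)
Your proposal is correct and matches the paper's approach: the theorem is presented there without an explicit proof, merely as the combination of the preceding lemmas and theorems of Section~\ref{sec:algo} together with the $O(n)$-time CDAWG construction of Narisawa~\emph{et al.} Your sketch simply makes this assembly explicit, citing exactly the same ingredients (Lemma~\ref{lemma:LCDAWG_size}, Lemma~\ref{lem:computing_type-2}, Theorems~\ref{lemma:jump:time} and~\ref{lemma:slp:correct:time}, Propositions~\ref{Gasieniec} and~\ref{Bille}, and the two query-time theorems), so there is nothing to add.
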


\section{Conclusions and further work}
\label{sec:conc}

In this paper, we presented a new repetition-aware data structure
called Linear-size CDAWGs.
$\LCDAWG(T)$ takes linear space in the number 
of the left- and right-extensions of the maximal repeats in $T$,
which is known to be small for highly repetitive strings.
The key idea is to introduce type-2 nodes
following LSTries proposed by
Crochemore~\emph{et al.}~\cite{epifanio:mignosi:grossi:crochemore:2016lst}.
Using a small SLP induced from edge-suffix links
that is enhanced with random access and prefix extraction data structures,
our $\LCDAWG(T)$ supports efficient pattern matching
and substring extraction.
This SLP is repetition-aware, i.e.,
its size is linear in the number of left- and right-extensions
of the maximal repeats in $T$.
We also showed how to efficiently construct $\LCDAWG(T)$.

Our future work includes implementation of $\LCDAWG(T)$ and
evaluation of its practical efficiency, when compared with
previous compressed indexes for repetitive texts.
An interesting open question is whether we can efficiently construct
$\LCDAWG(T)$ in an \emph{on-line manner} for growing text.



\end{document}